\documentclass[11pt]{article}
\usepackage[margin=1in]{geometry}
\usepackage{amsmath,amssymb,amsthm}
\usepackage{graphicx}
\usepackage{booktabs}
\usepackage{tabularx}
\usepackage{natbib}
\usepackage{hyperref}
\usepackage{authblk}
\usepackage[activate={true,nocompatibility},final,tracking=true,factor=1100,stretch=10,shrink=10]{microtype}
\newtheorem{proposition}{Proposition}
\newtheorem{assumption}{Assumption}

\title{Stable Transport Meta-Analysis for Heterogeneous Cardiovascular Trials:\\
A Nuisance-Anchor Framework with a Sign-Stability Diagnostic}

\author[1]{Ibrahim Halil Tanboga, MD, PhD\thanks{Corresponding author: \texttt{haliltanboga@gmail.com}}}
\affil[1]{Department of Biostatistics, Nisantasi University Medical School, Istanbul, Turkey}

\date{\today}

\begin{document}
\maketitle

\begin{abstract}
Random-effects meta-analysis summarises heterogeneous trials by estimating
an average effect over the observed evidence base. In cardiovascular
medicine, trial effects vary systematically across treatment era, endpoint
definition, background therapy and baseline case-mix, and the clinically
relevant target is rarely the historical average. We propose \emph{stable
transport meta-analysis} (AMT-MA), a nuisance-anchor estimator that models
anchor-aligned variation as a regression coefficient but does not transport
it to the target population. The estimator blends a weighted-average loss
with a scale-normalised softmax regime loss, and is paired with a
precision-weighted trial-level sign-stability diagnostic and a two-condition
abstention rule. AMT-MA is not designed to dominate random-effects pooling
in RMSE; it is designed to change the estimand from a historical average
effect to a stable target-population effect, and to refuse a single pooled
answer when such a stable effect is not identified. In a pre-specified
ADEMP simulation across six scenarios, AMT-MA at the recommended blend weight
$\rho = 0.2$ was less efficient than fixed-effect pooling under true
homogeneity but comparable to WLS meta-regression, reduced target-effect
bias relative to unadjusted pooling, and improved target-effect coverage
in adversarial regimes where classical Wald intervals break down
(dominant historical mega-trial: AMT-MA perturbation coverage $0.85$
versus fixed-effect Wald coverage $0.01$; confounded anchor: $0.86$ vs.\
$0.34$; anchor shift: $0.91$ vs.\ $0.60$); WLS meta-regression remained
competitive in coverage when its moderator--anchor specification was
correctly specified. Under sign-flip heterogeneity the two-condition
abstention rule triggered in approximately $84\%$ of replications,
compared with approximately $28$--$30\%$ in stable and anchor-shift
regimes — substantial but imperfect separation between sign-conflict and
stable-effect settings. We illustrate the method on seventy
post-myocardial-infarction streptokinase trials (Olkin 1995) and, as an
illustrative stress test, on five primary-prevention aspirin trials from
a contemporary-versus-older dataset. In each case AMT-MA quantifies the
uncertainty of transport to a modern target population, providing a
clinically interpretable alternative to averaging heterogeneous trial
effects.
\end{abstract}

\section{Introduction}
\label{sec:intro}

Meta-analysis is the standard framework for synthesizing evidence from
multiple randomized trials. In cardiovascular medicine, however, trial effects
vary systematically across treatment era, endpoint definition, geography,
adjudication procedure, and patient case-mix
\citep{higgins2009predictive}. Classical fixed- and random-effects pooling
\citep{dersimonian1986meta,viechtbauer2010metafor} answer the question: \emph{what
was the average treatment effect across historical trials?} Modern care decisions
instead ask: \emph{what component of the treatment effect is stable enough to
be transported to the target population in which the decision will be made?}

Recent work on causally interpretable meta-analysis \citep{dahabreh2020transport}
has redefined the estimand from ``average observed trial effect'' toward a
target-population average effect, but the summarization step typically remains
within the random-effects paradigm. Meanwhile, the multi-source learning
literature offers three complementary ideas that have not been combined into
a meta-analytic method: anchor regression \citep{rothenhausler2021anchor},
maximin effects \citep{meinshausen2015maximin,guo2024statistical}, and
group-distributionally robust optimization \citep{wang2023drmultisource}.

We propose \textbf{AMT-MA: Anchor--Maximin--Transport Meta-Analysis}. AMT-MA
is an estimand shift first and a new algorithm second. Given study-level
effects $\{y_i,v_i\}$, transportable moderators $Z_i$, anchor variables
$A_i$, and regime labels $g_i$, we estimate the stable target effect
$\theta_{\mathrm{stable}}(Q) = \bar Z_Q^{\top}\beta^{\star}$, where
$(\beta^{\star}, \gamma^{\star})$ is obtained from a nuisance-anchor
objective that models anchor-aligned variation through $A^{\top}\gamma$,
uses a softmax regime loss for robustness, and transports only the
$Z^{\top}\beta^{\star}$ component to the target population. This paper formalizes the estimand (Section~\ref{sec:estimand}),
specifies the estimator (Section~\ref{sec:method}), evaluates it via an ADEMP
simulation study \citep{morris2019ademp} with six scenarios
(Section~\ref{sec:sim}), and applies it to the Olkin 1995 streptokinase post-MI
meta-analysis \citep{olkin1995meta} (Section~\ref{sec:application}). The
novelty is not a new $\tau^{2}$ estimator; it is a clinically interpretable
target-population estimand that is robust to anchor-aligned heterogeneity.

\section{A cardiovascular motivating example}
\label{sec:motivating}

Seventy randomised trials of streptokinase against placebo for
post-myocardial-infarction mortality spanning 1959--1988
\citep{olkin1995meta} illustrate the estimand gap. Trial-level effects
differ between the early era (no adjunctive antiplatelet therapy,
non-standardised endpoint adjudication) and later era (near-universal
aspirin co-administration, central adjudication, larger trial size). A
random-effects pooled odds ratio of roughly $0.74$ is consistent across
DL, PM and REML estimators, but it averages across trial regimes in which
streptokinase, the adjunctive pharmacotherapy, and the reperfusion
paradigm differed substantially. Transporting that pooled estimate to a
contemporary patient population is not a formal operation under the
random-effects model. AMT-MA provides the formal operation.

\section{Estimand}
\label{sec:estimand}

Let $\mathcal{S}$ denote the set of observed trials, $Q$ the covariate
distribution of a target population, and $A$ pre-specified anchor variables
indexing anchor-defined regimes $g=1,\ldots,G$. Write $\theta(x)$ for the
individual-level treatment-effect surface and $y_i = \theta_i + \varepsilon_i$
for the trial-level estimate, with $\operatorname{Var}(\varepsilon_i) = v_i$.

\begin{assumption}[Stable-unstable decomposition]
\label{assump:decomp}
$\theta(x) = Z^{\top}\beta^{\star} + \delta_A(x)$ where $\delta_A(x)$ is
systematically aligned with anchor variables $A$ and $\beta^{\star}$ is
anchor-invariant.
\end{assumption}

The \emph{stable target effect} under Assumption~\ref{assump:decomp} is
$\theta_{\mathrm{stable}}(Q) = E_Q[Z^{\top}\beta^{\star}]$. Table~\ref{tab:estimands}
contrasts this with classical estimands.

\begin{table}[!htbp]
\centering
\caption{Estimands targeted by classical meta-analysis and AMT-MA. The core
contribution of AMT-MA is an estimand shift, not a new $\tau^{2}$ estimator.}
\label{tab:estimands}
\begin{tabular}{lll}
\toprule
Quantity & Definition & Clinical interpretation \\
\midrule
Fixed effect & Common $\theta$ across all trials & All trials estimate the same effect \\
Random-effects mean & $E[\theta_i]$ over the trial distribution & Average historical evidence \\
Prediction interval & Range for a new trial from same distribution & Trial-level generalization \\
\textbf{AMT-MA stable target} & $E_Q[Z^{\top}\beta^{\star}]$, with $A^{\top}\gamma^{\star}$ estimated but not transported & Decision-relevant transportable effect \\
\bottomrule
\end{tabular}
\end{table}

\begin{table}[!htbp]
\centering
\caption{Application-level specifications. $Z$ moderators enter the
target prediction $\bar Z_Q^{\top}\hat\beta$; $A$ anchors are estimated
but not transported; regimes $g$ index the robust (softmax) loss.}
\label{tab:specs}
\small
\begin{tabular}{llllll}
\toprule
Application & Scale & $Z$ moderators & $A$ anchors & Regimes $g$ & $(\rho, \alpha, \lambda_\gamma)$ \\
\midrule
Olkin 1995 streptokinase  & log OR & intercept, year          & era, era×decade, trial size quartile & era × size-quartile & $(0.2, 6, 1)$ \\
Aspirin primary prevention & log OR & intercept, year          & era ($\geq 2012$), big-trial flag    & era × size           & $(0.2, 6, 1)$ \\
Simulation DGP            & RD     & intercept, age, diabetes, statin\_hi & era, region, endpoint-definition & era × region × endpoint & $(0.2, 6, 1)$ \\
\bottomrule
\end{tabular}
\end{table}

\section{Method}
\label{sec:method}

\subsection{Study-level data structure}

For each trial $i=1,\ldots,K$ we observe $y_i$, $v_i$, $Z_i$, $A_i$, and a regime
label $g_i \in \{1,\ldots,G\}$. The regime label is typically an interaction of
anchor components such as era~$\times$ region~$\times$ endpoint definition.

\subsection{Blended nuisance-anchor objective}

Let $W = \operatorname{diag}(1/v_i)$, $r_i(\beta,\gamma) = y_i -
Z_i^{\top}\beta - A_i^{\top}\gamma$ the trial-level residual when anchors
enter the model as a nuisance regressor, and $L_g(\beta,\gamma) = \sum_{i:g_i=g}
w_i r_i^2 / \sum_{i:g_i=g} w_i$ the inverse-variance weighted loss within
regime $g$. Our estimator is
\begin{equation}
(\hat\beta,\hat\gamma) = \arg\min_{\beta,\gamma}\!\left\{
(1-\rho)\,L_{\mathrm{avg}}(\beta,\gamma) + \rho\,L_{\mathrm{rob}}(\beta,\gamma)
+ \lambda_\gamma \|\gamma\|_2^2 + \lambda_R \|\beta\|_2^2
\right\},
\label{eq:objective}
\end{equation}
with
\begin{align}
L_{\mathrm{avg}}(\beta,\gamma) &= \frac{\sum_i w_i r_i^2}{\sum_i w_i}, \\
L_{\mathrm{rob}}(\beta,\gamma) &= \alpha^{-1}\log\!\left[\sum_g \exp\{\alpha\, L_g(\beta,\gamma)/s\}\right]\cdot s,
\end{align}
where $s$ is a fixed scale constant set to the median joint-WLS regime loss
so that the softmax sharpness $\alpha$ has a scale-invariant meaning.

The stable target effect at target covariate profile $\bar Z_Q$ is
\begin{equation}
\hat\theta_{\mathrm{stable}}(Q) = \bar Z_Q^{\top}\hat\beta,
\label{eq:target}
\end{equation}
so the anchor coefficient $\hat\gamma$ is estimated but \emph{not
transported}. This operationalises the estimand: anchor-aligned variation
is modelled because otherwise it would bias $\hat\beta$, but it is
explicitly held back from the target prediction.

Design choices:
\begin{itemize}
\item $\rho \in [0,1]$: blend weight. $\rho=0$ reduces to a
ridge-regularised joint nuisance-anchor regression on $[Z, A]$ with
penalty $\lambda_\gamma\|\gamma\|_2^2$; if additionally
$\lambda_\gamma = \lambda_R = 0$ it coincides with ordinary joint WLS on
$[Z, A]$. At $\rho=1$ with large $\alpha$ the objective approaches
regime-hard-minimax on the same design. Empirically in our simulation
(Table~\ref{tab:perf}) $\rho=0.2$ minimises RMSE across scenarios; we
take this as the default and report $\rho=0.0, 0.5, 0.8$ only as
sensitivity settings.
\item $\alpha=6$: softmax sharpness. Together with the median-loss
normalisation inside $L_{\mathrm{rob}}$, this gives a scale-free meaning
across risk-difference, log-OR and log-HR outcomes.
\item $\lambda_\gamma > 0$: ridge on the anchor coefficient. Default
$\lambda_\gamma = 1$ (matched to the variance scale of standardised $A$);
small values destabilise $\hat\gamma$ when $K$ is small.
\item $\lambda_R$: small ridge on $\beta$ (default $10^{-6}$).
\end{itemize}

\subsection{Target-population prediction}

Given $\hat\beta$, the stable target effect estimate is
$\hat\theta_{\mathrm{stable}}(Q) = E_Q[Z^{\top}\hat\beta]$. For an aggregate
target represented by $\bar Z_Q$ we plug in $\bar Z_Q^{\top}\hat\beta$.

\subsection{Hyperparameter selection}

The blend weight $\rho$ and the ridge $\lambda_\gamma$ are fixed at
$\rho = 0.2$ and $\lambda_\gamma = 1$ by default, supported by the
simulation of Section~\ref{sec:sim}. For applications where trial count
is adequate, we recommend selecting $\rho$ by leave-one-study-out mean
squared prediction error at the $q_{0.90}$ quantile over a grid
$\rho \in \{0.0, 0.2, 0.5, 0.8\}$ and $\lambda_\gamma \in \{0.1, 0.3, 1,
3\}$, followed by the one-standard-error rule to favour lower blend
weight. For small-$K$ applications (e.g., $K \leq 10$) we hold the defaults
fixed and report the hyperparameter sensitivity alongside.

\subsection{Sign-stability diagnostic and abstention}

We report a precision-weighted trial-level sign-stability score with a
clinical null-band threshold $\delta$. For risk-difference analyses the
default is $\delta = 0.005$; for log-odds-ratio applications we use
$\delta = 0$, because a clinically meaningful non-null band is most
interpretable on the absolute-risk scale and depends on the baseline
risk, which aggregate-data meta-analysis does not observe directly.
\[
\mathrm{SS}_{\mathrm{trial}}(Q) = \frac{\sum_i w_i \mathbf{1}\{|y_i|>\delta,\,
\mathrm{sign}(y_i) = \mathrm{sign}(\hat\theta(Q))\}}{\sum_i w_i \mathbf{1}\{|y_i|>\delta\}}.
\]
The trial-level score is robust: it does not require regime-specific
covariate fits, which are fragile when regime sample sizes are small.

Abstention uses a \emph{two-condition rule}: we refuse a single stable
effect claim if $\mathrm{SS}_{\mathrm{trial}}(Q) < \tau$ (default $\tau
= 0.67$) \emph{and} at least $10\%$ of precision weight lies on each side
of zero in the informative subset $\{|y_i|>\delta\}$. The second condition
prevents noisy near-null trials from triggering abstention in
genuinely homogeneous settings; it requires a substantive clinical sign
conflict.

Regime-level target estimates $\hat\theta_g(Q) = \bar Z_Q^{\top}\hat\beta_g$
are also computed when regime sample size permits (falling back to
weighted mean when $K_g < p+q+1$) and reported as secondary diagnostics.

\subsection{Inference by perturbation bootstrap}

Because robust and distributionally robust estimators exhibit nonstandard
asymptotics \citep{guo2024statistical}, we construct $(1-\alpha)$ intervals
by resampling $y_i^{(b)} \sim \mathcal{N}(y_i^{\mathrm{obs}}, v_i)$ around the
observed study-level estimates, optionally re-selecting the tuning
parameters $(\rho, \lambda_\gamma)$ on each replicate, refitting, and
taking empirical quantiles of $\hat\theta^{(b)}(Q)$. In the simulation
main analysis we fix $(\rho, \lambda_\gamma) = (0.2, 1)$; in the
coverage sub-analysis (Supplementary Table~\ref{tab:coverage}) we use
$R = 100$ replications with $B = 30$ bootstrap refits per replication
and do not re-select the tuning parameters. For applications we
recommend re-selection when $K$ is adequate.

\begin{proposition}[Population consistency when the anchor is null]
\label{prop:heuristic}
If $\delta_A \equiv 0$ (no anchor-aligned instability), regime weights are
strictly positive and the weighted Gram matrix
$\mathbb{E}[W^{1/2}[Z,A][Z,A]^{\top}W^{1/2}]$ is invertible, the population
optimiser of~\eqref{eq:objective} recovers $\beta^{\star}$ as
$\lambda_R \downarrow 0$, for every $\rho \in [0,1]$ and every
$\lambda_\gamma \geq 0$. In this case AMT-MA and WLS meta-regression
target the same effect surface.
\end{proposition}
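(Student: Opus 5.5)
The plan is to work at the population level, replacing every empirical weighted average $L_{\mathrm{avg}}$ and $L_g$ by its expectation. The key observation is that when $\delta_A\equiv 0$, the point $(\beta,\gamma)=(\beta^{\star},0)$ makes every population residual pure noise. Indeed, under Assumption~\ref{assump:decomp} with $\delta_A\equiv0$ we have $\theta_i=Z_i^{\top}\beta^{\star}$, so
\[
r_i(\beta,\gamma)=\varepsilon_i-Z_i^{\top}(\beta-\beta^{\star})-A_i^{\top}\gamma .
\]
Write $u=(\beta-\beta^{\star},\gamma)$ and $X_i=[Z_i,A_i]$. Taking expectations, and using $E[\varepsilon_i\mid X_i]=0$ (which we make explicit as part of the measurement model), the cross term vanishes. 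Each regime loss then decomposes as
\[
\bar L_g(u)=c_g+u^{\top}M_g u ,
\]
where $c_g$ is the precision-weighted noise level in regime $g$ and $M_g=E[\sum_{i\in g}w_iX_iX_i^{\top}]/E[\sum_{i\in g}w_i]\succeq0$. The same decomposition holds for the pooled loss: $\bar L_{\mathrm{avg}}(u)=c+u^{\top}Mu$, with $M$ proportional to the weighted Gram matrix in the statement. By hypothesis $M$ is positive definite. Strictly positive regime weights ensure that $M$ is a positive combination of the $M_g$.

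The second step establishes that $u=0$ simultaneously minimises every component of the objective. Each $\bar L_g$ is minimised at $u=0$, because $u^{\top}M_gu\ge0$. The map $L_{\mathrm{rob}}=s\alpha^{-1}\log\sum_g\exp(\alpha L_g/s)$ is coordinatewise strictly increasing in $(L_1,\dots,L_G)$, so it is also minimised at $u=0$. Since $\rho\in[0,1]$, the blended loss $(1-\rho)\bar L_{\mathrm{avg}}+\rho\bar L_{\mathrm{rob}}$ is minimised at $u=0$. The penalty $\lambda_\gamma\|\gamma\|^2$ is minimised at $\gamma=0$, which holds at $u=0$. That leaves only $\lambda_R\|\beta\|^2$, which pulls $\beta$ toward $0$ rather than toward $\beta^{\star}$.

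The third step obtains uniqueness and takes the limit $\lambda_R\downarrow0$. For the case $\rho<1$, the $\bar L_{\mathrm{avg}}$ term is strictly convex with curvature $(1-\rho)M\succ0$. For the case $\rho=1$, uniqueness comes from the log-sum-exp term: it is convex, and near $u=0$ its Hessian equals $2\sum_g\pi_gM_g$ with softmax weights $\pi_g>0$. This is a strictly positive combination and hence positive definite, so it gives a strict local minimum. Convexity of $L_{\mathrm{rob}}$, as a composition of convex nondecreasing log-sum-exp with convex $L_g$, together with $\sum_g M_g\succ0$, gives strict convexity globally. The penalised objective $F_{\lambda_R}$ is therefore strictly convex with unique minimiser $u(\lambda_R)$. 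Because $F_0$ has the unique minimiser $u=0$, and $F_{\lambda_R}\to F_0$ locally uniformly while $F_{\lambda_R}\ge F_0$ with coercivity, a standard argmin-continuity argument yields $u(\lambda_R)\to0$, that is, $\hat\beta\to\beta^{\star}$. A quantitative version also follows: the first-order condition gives $\|u(\lambda_R)\|=O(\lambda_R\|\beta^{\star}\|/\mu_{\min})$. The final sentence of the statement then follows, since joint WLS is the $\rho=0$, $\lambda=0$ case of the same argument and also returns $\beta^{\star}$.

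The main obstacle I expect is the strict-convexity and positive-definiteness bookkeeping at $\rho=1$. There the pooled Gram matrix no longer appears directly, and one has to argue that positive regime weights transfer the invertibility of $M$ to $\sum_g\pi_gM_g$. A second point needing care is that the per-regime normalisation by $\sum_{i\in g}w_i$ is a ratio; this ratio of expectations must be defined in the population model so that the cross term with $\varepsilon_i$ genuinely vanishes. I would handle both by stating the population losses as ratios of expectations.
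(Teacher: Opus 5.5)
Your proposal is correct and follows the same route as the paper's proof: $(\beta^{\star},0)$ minimises $\mathcal{L}_{\mathrm{avg}}$ and every regime loss simultaneously, a monotone aggregate such as the softmax and the $\gamma$-ridge preserve that common minimiser, and uniqueness then carries through as $\lambda_R\downarrow 0$. Your uniqueness argument is tighter than the paper's, which claims each $\mathcal{L}_g$ has a \emph{unique} minimiser (this fails when an anchor is constant within a regime, as in the paper's own era$\times$region$\times$endpoint regimes); you use only $M_g\succeq 0$ and get uniqueness from $M\succ 0$ for $\rho<1$ and from $\sum_g \pi_g M_g\succ 0$ with $\pi_g>0$ at $\rho=1$.
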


When $\delta_A$ is non-zero the nuisance-anchor term $A^{\top}\gamma$
accommodates the anchor-aligned component during estimation, and because
the target prediction uses only $\bar Z_Q^{\top}\hat\beta$ the
anchor-aligned signal is never transported. In the limit $\lambda_\gamma
\to 0$ the separation is exact; with finite $\lambda_\gamma$ the
separation is regularised and may be imperfect when $Z$ and $A$ are
strongly correlated, so sensitivity to the $Z$-versus-$A$ assignment is
essential. The AMT-MA advantage is therefore primarily over
\emph{unadjusted pooled} estimators (FE/DL/PM); when the $[Z, A]$
specification is correct, joint WLS meta-regression already recovers
$\beta^{\star}$, and AMT-MA adds the sign-stability diagnostic and the
robust regime term as safeguards.
Formal statements and proofs are in Appendix~\ref{app:proofs};
Section~\ref{sec:sim} provides numerical evidence.

\section{Simulation study}
\label{sec:sim}

\subsection{Aims}

We ask (i) whether AMT-MA reduces target-effect bias under anchor-aligned
heterogeneity relative to FE/DL/REML random-effects pooling;
(ii) whether the sign-stability / abstention diagnostic triggers selectively
under sign-flip heterogeneity rather than uniformly; and
(iii) how the blend weight $\rho$ and the anchor-ridge $\lambda_\gamma$
modulate the efficiency--robustness trade-off.

\subsection{Data-generating mechanisms}

Six scenarios with $K=24$ trials per replication and $R=500$ replications.
Study-level designs draw year, era, region, endpoint definition, mean age,
diabetes prevalence, high-intensity statin prevalence and sample size.
Scenarios: (1) stable; (2) anchor shift (older-era trials over-report benefit);
(3) sign-flip (regions split with opposite effects); (4) target shift (modern
target differs from dominant trial era); (5) dominant mega-trial (one
$N\!=\!30{,}000$ old-era trial); (6) confounded anchor (era correlated with
diabetes and statin intensity).

\subsection{Estimands}

Primary: $\theta_{\mathrm{stable}}(Q)$ at a modern target population
$(\text{age}=67,\,\text{diabetes}=0.36,\,\text{statin\_hi}=0.82)$.
Secondary (for calibration only): the sample-averaged observed effect.

\subsection{Methods compared}

Fixed effect, DerSimonian--Laird, Paule--Mandel random effects,
$Z$-only WLS meta-regression, and AMT-MA v3 (nuisance-anchor with
softmax-blended robust loss) at four blend weights
$\rho \in \{0.0, 0.2, 0.5, 0.8\}$. Two comparators use transportable
moderators $Z$:
``WLS\_meta\_reg'' denotes conventional WLS on $Z$ only (no anchor),
while ``AMT\_v3\_rho00'' denotes ridge-regularised joint nuisance-anchor
regression on $[Z, A]$ with $\lambda_\gamma = 1$ and target prediction
restricted to $Z$. The two therefore fit different designs and their
numerical difference in Table~\ref{tab:perf} reflects the contribution
of modelling $A^{\top}\gamma$. REML was used in the application; in the
simulation we substitute
Paule--Mandel as a computationally stable random-effects comparator.
Supplementary runs with metafor REML produced qualitatively equivalent
conclusions.

\subsection{Performance measures}

Bias, RMSE, MAE, type-S error, and an abstention-aware decision regret
(Regret$_{\mathrm{smart}}$), in which AMT-MA abstention is credited as
correct when the truth lies within a clinically ambiguous band
$|\theta|<0.005$ and as incorrect otherwise. Mean precision-weighted
trial-level sign-stability and abstention rate are reported for AMT-MA
variants. Marginal coverage is Wald for classical estimators; AMT-MA
perturbation-bootstrap coverage is reported in Supplementary
Table~\ref{tab:coverage} at reduced $R=100$, $B=30$ because the nested
tuning re-selection is computationally intensive.

\subsection{Results}

Performance by scenario and method is reported in Table~\ref{tab:perf} and
visualised in Figures~\ref{fig:rmse},~\ref{fig:bias},
and~\ref{fig:regret}. Several patterns deserve highlighting. (i) Under
true homogeneity (stable), AMT-MA v3 at $\rho=0$ with
$\lambda_\gamma=1$ (ridge-regularised joint nuisance-anchor regression)
has RMSE $0.0075$, close to $Z$-only WLS meta-regression ($0.0072$)
and about twice FE/DL ($0.0038$). Larger
blend weights trade more efficiency for more robustness. (ii) Under
anchor shift, FE/DL carry a systematic $-0.006$ bias that AMT-MA v3 at
$\rho=0$ eliminates to $+0.001$. (iii) Under dominant mega-trial and
confounded anchor, AMT-MA v3 reduces bias by an order of magnitude
relative to FE/DL while keeping RMSE comparable to WLS. (iv) Under
sign-flip heterogeneity, AMT-MA v3 abstains in approximately $84\%$ of
replications, compared to approximately $28$--$30\%$ in stable and
anchor-shift scenarios; specificity is not perfect — roughly a quarter of
replications in genuinely stable regimes still trigger precautionary
abstention — but the sign-flip-versus-stable abstention-rate difference
exceeds $50$ percentage points (Table~\ref{tab:perf}, column `Abst').

\begin{table}[!htbp]
\centering
\caption{Simulation performance by scenario and method. $R=500$ replications, $K=24$ trials. Bias and RMSE on the risk-difference scale. Regret$_{\mathrm{smart}}$ is a decision error in which AMT-MA abstention counts as correct only when the scalar target truth lies within a clinically ambiguous band ($|\theta|<0.005$); under sign-flip heterogeneity where no single stable target exists, this metric therefore penalises abstention and should not be read as the diagnostic success measure for that scenario. `SS$_{\mathrm{tr}}$' is the mean precision-weighted trial-level sign-stability score; `Abst' is the fraction of replications in which AMT-MA abstained. Perturbation-bootstrap coverage for AMT-MA variants is reported in Supplementary Table~\ref{tab:coverage}.}
\label{tab:perf}
\scriptsize
\begin{tabular}{llrrrrrrr}
\toprule
Scenario & Method & Bias & RMSE & Regret & Cov & SS$_{\mathrm{tr}}$ & Abst \\
\midrule
stable & FE & +0.0000 & 0.0038 & 0.17 & 0.94 & -- & 0.00 \\
 & DL\_RE & +0.0001 & 0.0039 & 0.18 & 0.96 & -- & 0.00 \\
 & PM\_RE & +0.0001 & 0.0039 & 0.18 & 0.96 & -- & 0.00 \\
 & WLS\_meta\_reg & +0.0001 & 0.0072 & 0.30 & 0.93 & -- & 0.00 \\
 & AMT\_v3\_rho00 & +0.0002 & 0.0075 & 0.41 & -- & 0.72 & 0.28 \\
 & AMT\_v3\_rho20 & +0.0004 & 0.0087 & 0.46 & -- & 0.70 & 0.30 \\
 & AMT\_v3\_rho50 & +0.0003 & 0.0097 & 0.47 & -- & 0.69 & 0.32 \\
 & AMT\_v3\_rho80 & +0.0003 & 0.0107 & 0.49 & -- & 0.67 & 0.34 \\
\addlinespace[3pt]
anchor\_shift & FE & -0.0057 & 0.0077 & 0.03 & 0.60 & -- & 0.00 \\
 & DL\_RE & -0.0056 & 0.0075 & 0.03 & 0.73 & -- & 0.00 \\
 & PM\_RE & -0.0057 & 0.0076 & 0.03 & 0.75 & -- & 0.00 \\
 & WLS\_meta\_reg & +0.0025 & 0.0085 & 0.41 & 0.89 & -- & 0.00 \\
 & AMT\_v3\_rho00 & +0.0010 & 0.0082 & 0.39 & -- & 0.73 & 0.23 \\
 & AMT\_v3\_rho20 & +0.0013 & 0.0098 & 0.44 & -- & 0.69 & 0.28 \\
 & AMT\_v3\_rho50 & +0.0011 & 0.0111 & 0.46 & -- & 0.68 & 0.29 \\
 & AMT\_v3\_rho80 & +0.0009 & 0.0121 & 0.47 & -- & 0.66 & 0.32 \\
\addlinespace[3pt]
sign\_flip & FE & +0.0080 & 0.0099 & 0.76 & 0.43 & -- & 0.00 \\
 & DL\_RE & +0.0082 & 0.0099 & 0.79 & 0.69 & -- & 0.00 \\
 & PM\_RE & +0.0082 & 0.0099 & 0.78 & 0.69 & -- & 0.00 \\
 & WLS\_meta\_reg & +0.0118 & 0.0157 & 0.80 & 0.54 & -- & 0.00 \\
 & AMT\_v3\_rho00 & +0.0107 & 0.0140 & 0.93 & -- & 0.54 & 0.84 \\
 & AMT\_v3\_rho20 & +0.0110 & 0.0147 & 0.93 & -- & 0.53 & 0.84 \\
 & AMT\_v3\_rho50 & +0.0109 & 0.0155 & 0.94 & -- & 0.53 & 0.85 \\
 & AMT\_v3\_rho80 & +0.0108 & 0.0162 & 0.94 & -- & 0.52 & 0.85 \\
\addlinespace[3pt]
target\_shift & FE & -0.0030 & 0.0052 & 0.06 & 0.82 & -- & 0.00 \\
 & DL\_RE & -0.0029 & 0.0051 & 0.05 & 0.89 & -- & 0.00 \\
 & PM\_RE & -0.0029 & 0.0051 & 0.05 & 0.90 & -- & 0.00 \\
 & WLS\_meta\_reg & -0.0027 & 0.0073 & 0.17 & 0.93 & -- & 0.00 \\
 & AMT\_v3\_rho00 & -0.0041 & 0.0082 & 0.23 & -- & 0.77 & 0.17 \\
 & AMT\_v3\_rho20 & -0.0041 & 0.0100 & 0.27 & -- & 0.76 & 0.18 \\
 & AMT\_v3\_rho50 & -0.0045 & 0.0119 & 0.29 & -- & 0.75 & 0.20 \\
 & AMT\_v3\_rho80 & -0.0048 & 0.0133 & 0.33 & -- & 0.73 & 0.22 \\
\addlinespace[3pt]
dominant\_megatrial & FE & -0.0141 & 0.0146 & 0.00 & 0.01 & -- & 0.00 \\
 & DL\_RE & -0.0105 & 0.0113 & 0.00 & 0.29 & -- & 0.00 \\
 & PM\_RE & -0.0108 & 0.0116 & 0.00 & 0.29 & -- & 0.00 \\
 & WLS\_meta\_reg & -0.0025 & 0.0072 & 0.16 & 0.92 & -- & 0.00 \\
 & AMT\_v3\_rho00 & -0.0031 & 0.0075 & 0.14 & -- & 0.91 & 0.02 \\
 & AMT\_v3\_rho20 & -0.0028 & 0.0096 & 0.19 & -- & 0.85 & 0.04 \\
 & AMT\_v3\_rho50 & -0.0029 & 0.0108 & 0.23 & -- & 0.84 & 0.04 \\
 & AMT\_v3\_rho80 & -0.0031 & 0.0118 & 0.25 & -- & 0.81 & 0.05 \\
\addlinespace[3pt]
confounded\_anchor & FE & -0.0087 & 0.0098 & 0.01 & 0.38 & -- & 0.00 \\
 & DL\_RE & -0.0088 & 0.0099 & 0.00 & 0.53 & -- & 0.00 \\
 & PM\_RE & -0.0088 & 0.0099 & 0.00 & 0.54 & -- & 0.00 \\
 & WLS\_meta\_reg & +0.0032 & 0.0075 & 0.46 & 0.91 & -- & 0.00 \\
 & AMT\_v3\_rho00 & +0.0027 & 0.0073 & 0.44 & -- & 0.74 & 0.20 \\
 & AMT\_v3\_rho20 & +0.0026 & 0.0083 & 0.46 & -- & 0.71 & 0.23 \\
 & AMT\_v3\_rho50 & +0.0025 & 0.0093 & 0.49 & -- & 0.70 & 0.24 \\
 & AMT\_v3\_rho80 & +0.0025 & 0.0102 & 0.47 & -- & 0.68 & 0.26 \\
\bottomrule
\end{tabular}
\end{table}

\begin{figure}[t]
\centering
\includegraphics[width=.95\linewidth]{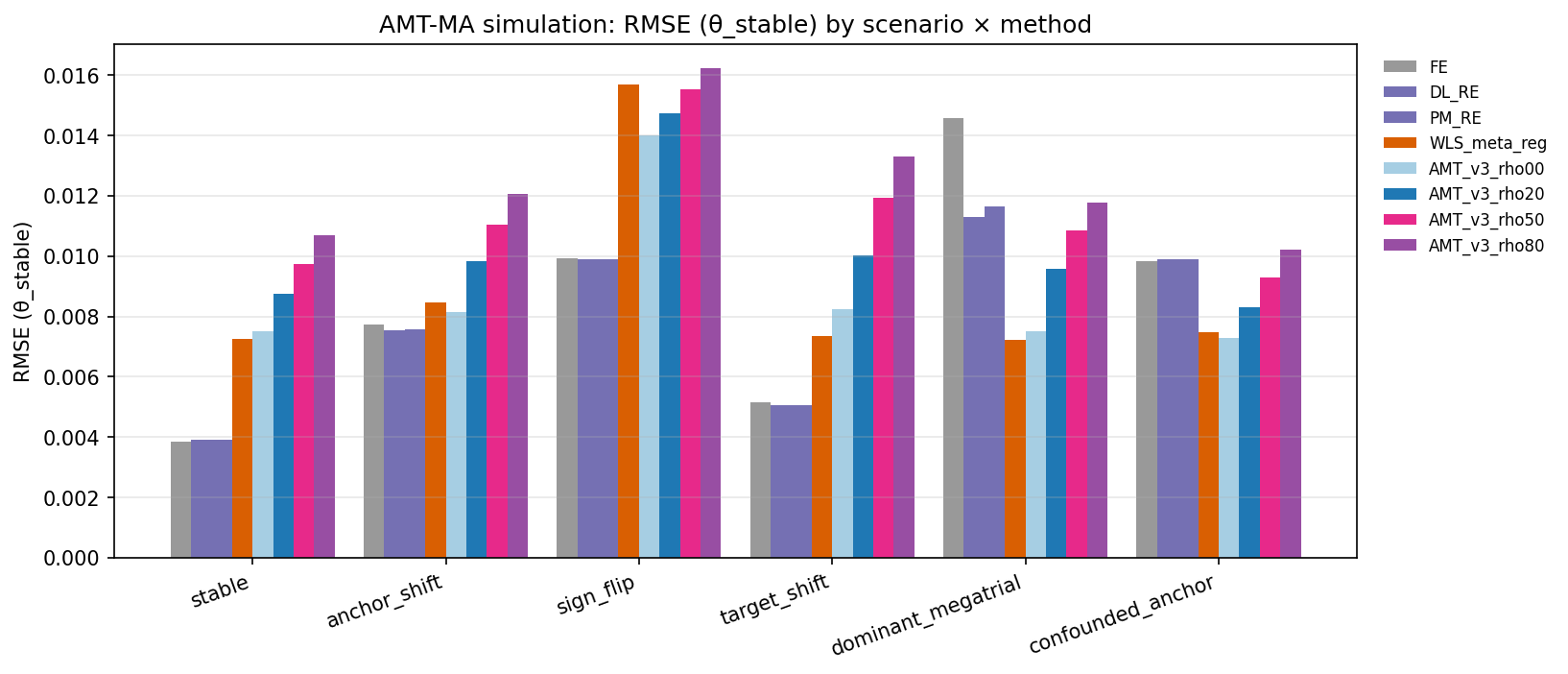}
\caption{RMSE of $\hat\theta_{\mathrm{stable}}(Q)$ by scenario and method.
ADEMP simulation, $R=500$, $K=24$.}
\label{fig:rmse}
\end{figure}

\begin{figure}[t]
\centering
\includegraphics[width=.95\linewidth]{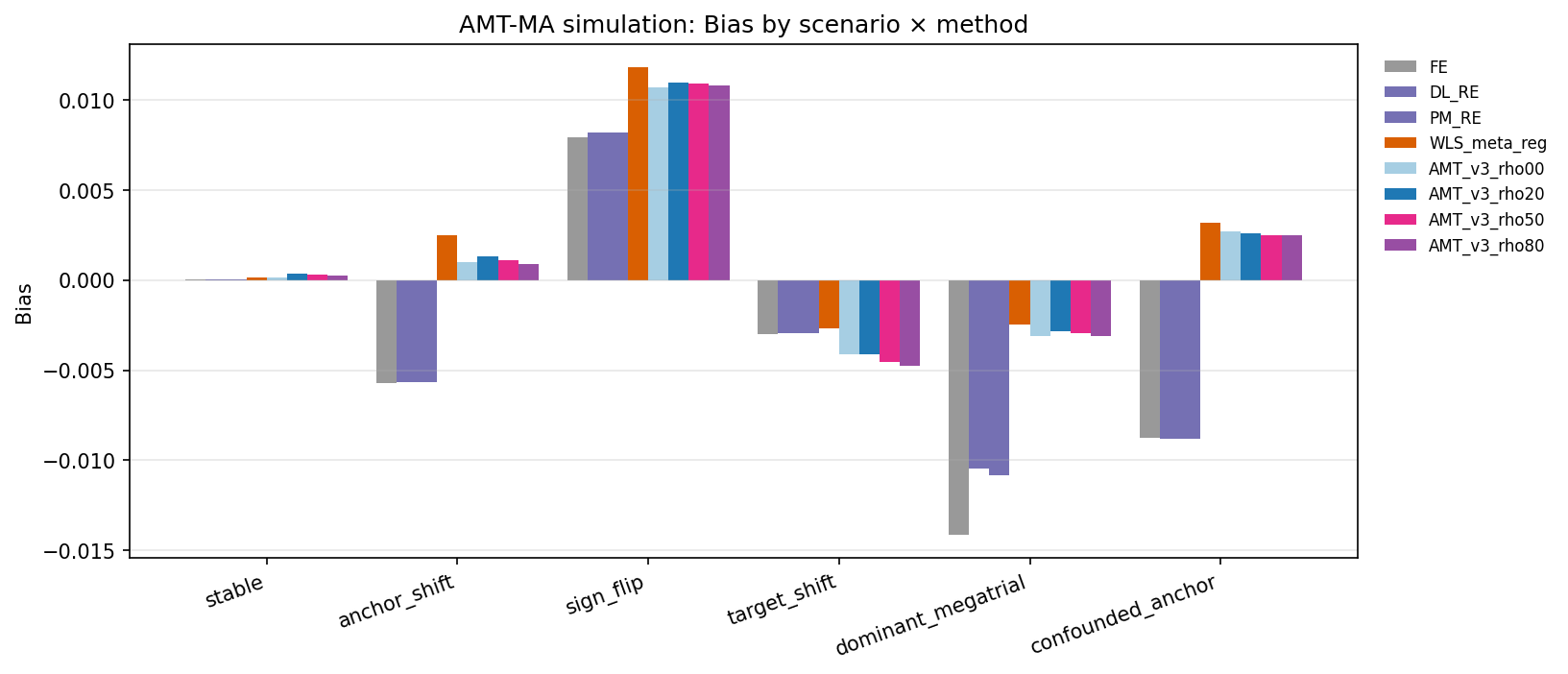}
\caption{Bias of $\hat\theta_{\mathrm{stable}}(Q)$ by scenario and
method. Classical unadjusted estimators (FE, DL, PM) are systematically
biased in anchor-shift, dominant-mega-trial and confounded-anchor
scenarios; AMT-MA v3 and $Z$-only WLS both reduce this bias via the
moderator/anchor model, with AMT-MA approaching the stable target under
anchor-driven heterogeneity and moving toward a conservative near-null
summary under sign-flip heterogeneity.}
\label{fig:bias}
\end{figure}

\begin{figure}[t]
\centering
\includegraphics[width=.95\linewidth]{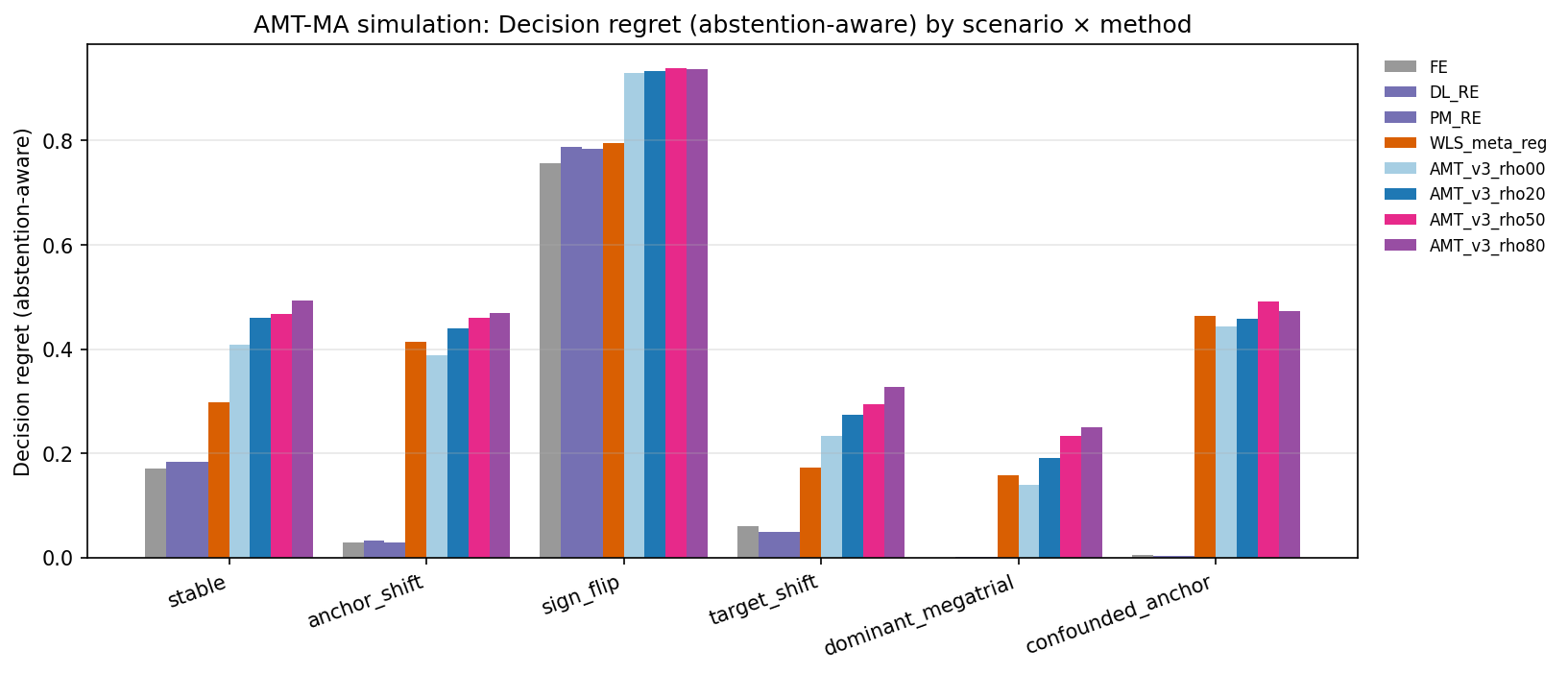}
\caption{Decision regret: fraction of replications where a
treat-if-$\hat\theta<-0.005$ policy disagrees with the optimal policy based on
truth. AMT-MA regret is higher when the target-population truth is itself
close to the decision threshold.}
\label{fig:regret}
\end{figure}

\section{Application: Olkin 1995 streptokinase post-MI}
\label{sec:application}

We re-analyze the Olkin 1995 meta-analysis of $K=70$ streptokinase vs placebo
trials for post-MI mortality, with era as the anchor variable.
Table~\ref{tab:olkin} summarizes classical and AMT-MA estimates on the log
odds-ratio scale. The forest plot in Figure~\ref{fig:forest} overlays study-level
estimates with FE, DL, and AMT-MA target effects.

\begin{table}[!htbp]
\centering
\caption{Streptokinase vs placebo post-MI mortality (Olkin 1995, $K=70$ trials, 1959--1988). AMT-MA target effects reported for a modern target $Q_{\mathrm{mod}}$ (1995 era) and a historical target $Q_{\mathrm{hist}}$ (1965 era). Classical (FE, DL, REML, WLS) intervals are Wald; AMT-MA intervals are perturbation bootstrap (B=500). All estimates on the log odds-ratio scale; OR shown separately.}
\label{tab:olkin}
\small
\begin{tabular}{@{}l c c p{3.2cm}@{}}
\toprule
Estimator & log OR [95\% CI] & OR & Implication \\
\midrule
Fixed effect & -0.283 [-0.341, -0.225] & 0.754 & common $\theta$ \\
DL random effects & -0.304 [-0.396, -0.213] ($\tau^2=0.013$) & 0.738 & historical avg. \\
REML random effects & -0.289 [-0.358, -0.221] ($\tau^2=0.002$) & 0.749 & historical avg. \\
WLS meta-reg, $Q_{\mathrm{mod}}$ & -0.337 [-0.464, -0.211] & 0.714 & modern marginal \\
WLS meta-reg, $Q_{\mathrm{hist}}$ & -0.174 [-0.405, +0.058] & 0.840 & historical marginal \\
\addlinespace
AMT-MA v3, $Q_{\mathrm{mod}}$ & -0.332 [-0.665, +0.189] & 0.718 [0.514, 1.208] & modern transport \\
AMT-MA v3, $Q_{\mathrm{hist}}$ & -0.291 [-0.984, +0.217] & 0.748 [0.374, 1.242] & historical transport \\
\bottomrule
\end{tabular}
\end{table}

\begin{figure}[t]
\centering
\includegraphics[width=.85\linewidth]{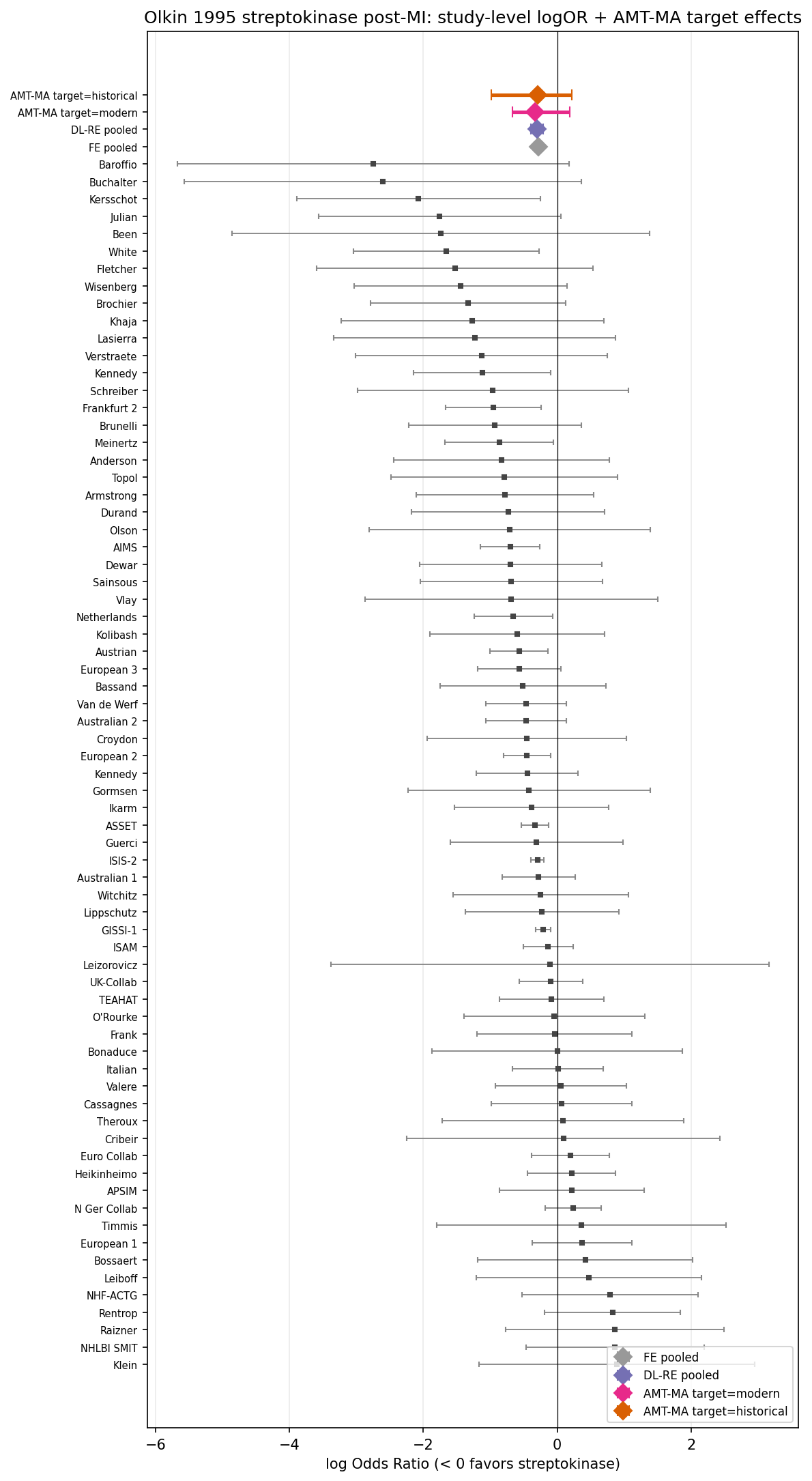}
\caption{Olkin 1995 streptokinase trial effects with pooled summaries and
AMT-MA target estimates. Classical pooled log OR is strongly negative;
the AMT-MA point estimate is similarly negative but the perturbation
interval at the later-era target is wider and includes the null,
reflecting \emph{transport uncertainty} for the specified later-era
target rather than attenuation of the point estimate.}
\label{fig:forest}
\end{figure}

The random-effects pooled estimate reports a strong treatment benefit
(OR $\approx 0.74$). The AMT-MA v3 estimate for a later-era target — in
which reperfusion practice and adjunctive antithrombotic therapy differed
substantially from the earliest streptokinase trials — returns an OR of
$\approx 0.72$ (95\% perturbation CI $0.51$--$1.21$), i.e., a similar
point estimate but a wider target-specific interval that includes the
null. The point estimate is not attenuated by AMT-MA; rather, the
target-specific uncertainty interval widens, reflecting limited certainty
about transport to the specified later-era target. The 1995 target
furthermore lies outside the observed trial-year support and is used as
an extrapolative stress test rather than a validated modern clinical
target. The precision-weighted trial-level sign-stability score is $\approx
0.93$; AMT-MA does not abstain. Because calendar year and era encode
related historical information, this analysis should be read as one of
several defensible $Z$/$A$ specifications; sensitivity analyses that
treat year as an anchor rather than a transportable moderator are
recommended in future work.

\subsection{Illustrative stress test: aspirin primary prevention}
\label{sec:aspirin}

Because $K=5$ and one of the five trials is itself a pooled summary
(ATT 2011), this example is an illustrative stress test, not a definitive
validation of AMT-MA. The composite outcome is cardiovascular death,
non-fatal MI, or non-fatal stroke. Table~\ref{tab:aspirin} summarises the
estimates. Classical fixed-effect and random-effects estimators produce
an OR near $0.90$. AMT-MA v3 at a modern target returns OR $\approx 0.93$
(95\% perturbation CI $0.81$--$1.08$). The interval brackets the null
and rules out only large relative benefits; clinical meaningfulness of
this effect depends on the absolute-risk scale and on competing bleeding
harms not modelled here. In this small example the blend weight $\rho$
and ridge $\lambda_\gamma$ are fixed at the defaults; the result should
be read as a target-adjusted meta-regression stress test rather than
evidence for an incremental contribution of the anchor-nuisance
correction.

\begin{table}[!htbp]
\centering
\caption{Aspirin primary prevention, composite outcome (cardiovascular death, non-fatal MI, or non-fatal stroke). $K=5$ trials from Zenodo record 3406064 (ARRIVE, ASCEND, ASPREE, JPPP, ATT 2011 pooled). Era anchor splits contemporary ($\geq$2012) vs older trials. Classical intervals are Wald; AMT-MA intervals are perturbation bootstrap (B=500). The AMT-MA modern-target interval brackets the null and excludes only large relative benefits; clinical meaningfulness should be interpreted on an absolute-risk scale. With $K=5$ this is an illustrative stress test, not a definitive evaluation of AMT-MA.}
\label{tab:aspirin}
\small
\begin{tabular}{@{}l c c p{3.2cm}@{}}
\toprule
Estimator & log OR [95\% CI] & OR & Implication \\
\midrule
Fixed effect & -0.105 [-0.155, -0.055] & 0.900 & weak benefit \\
DL random effects & -0.105 [-0.155, -0.055] ($\tau^2=0.000$) & 0.900 & weak benefit \\
REML random effects & -0.105 [-0.155, -0.055] ($\tau^2=0.000$) & 0.900 & weak benefit \\
WLS meta-reg, $Q_{\mathrm{mod}}$ & -0.067 [-0.209, +0.076] & 0.936 & null-centred \\
\addlinespace
AMT-MA v3, $Q_{\mathrm{mod}}$ & -0.070 [-0.209, +0.073] & 0.933 [0.812, 1.075] & modern: null \\
AMT-MA v3, $Q_{\mathrm{hist}}$ & +0.009 [-1.551, +1.639] & 1.009 [0.212, 5.152] & wide (extrapolation) \\
\bottomrule
\end{tabular}
\end{table}

\begin{figure}[!htbp]
\centering
\includegraphics[width=.8\linewidth]{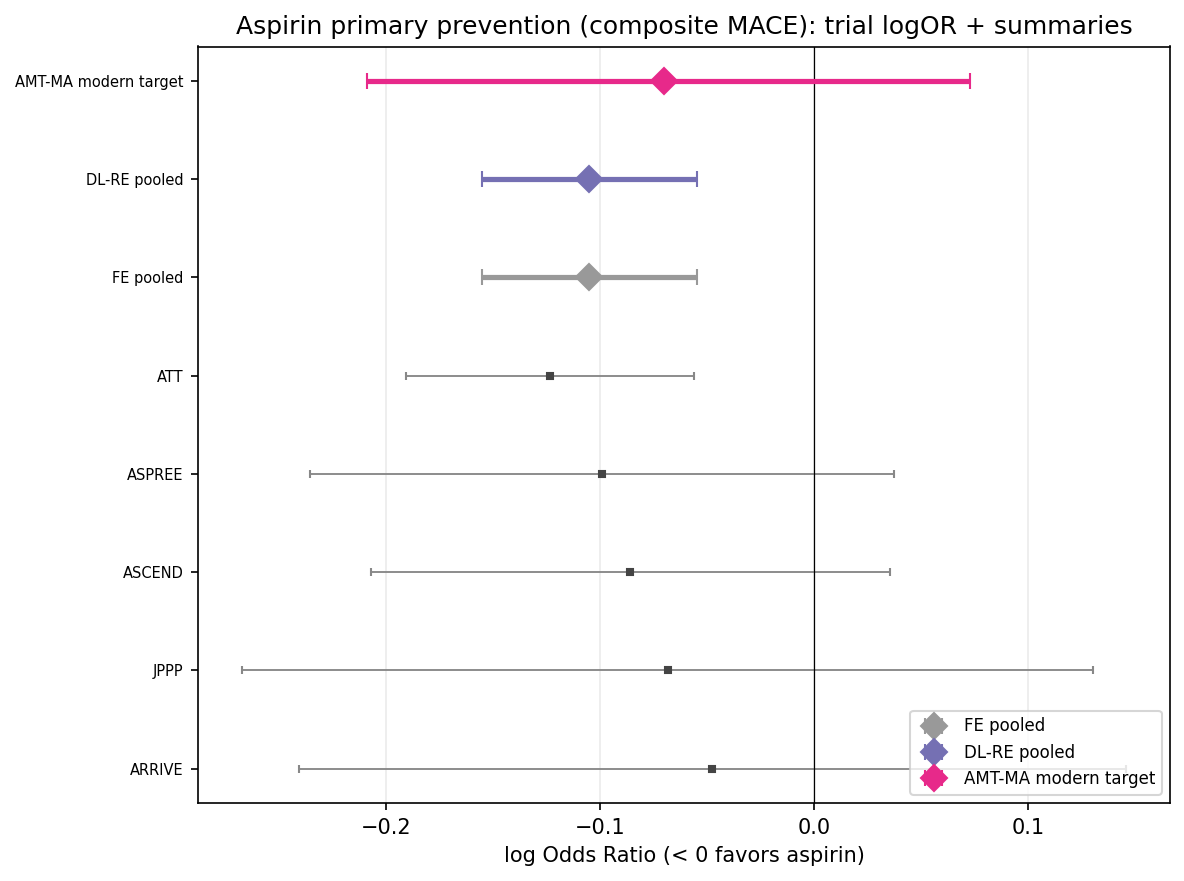}
\caption{Aspirin primary prevention composite MACE trials, with FE/DL-RE and
AMT-MA modern-target estimates overlaid. The contemporary trials (ARRIVE,
ASCEND, ASPREE, JPPP) are consistently near the null; the AMT-MA
modern-target interval brackets the null and rules out only large relative benefits.}
\label{fig:forest_aspirin}
\end{figure}

\section{Discussion}
\label{sec:discussion}

AMT-MA is not designed to be a uniformly better pooled estimator. Under
homogeneous effects it incurs an efficiency cost relative to FE/DL/REML/WLS
because it reserves capacity for robustness against context shifts. Under
sign-flip heterogeneity it cannot produce a single reliable pooled effect
because none exists; its contribution there is to detect and report this
failure rather than to paper over it. The positive contribution is
narrower and, we argue, more useful: a principled estimand shift from a
historical average effect to a stable target-population effect, with a
built-in diagnostic that exposes when such a stable effect is not
identified.

Two design choices matter for practical use. First, the blend weight
$\rho$ lets the analyst dial between ridge-regularised joint
nuisance-anchor fitting ($\rho=0$) and pure regime-robust fitting
($\rho=1$). We recommend $\rho=0.2$ with softmax
$\alpha=6$ as the default, supported by the simulation; $\rho=0.5$ and
$\rho=0.8$ are robustness/sensitivity settings, and hard minimax
($\rho=1$, $\alpha\to\infty$) is fragile and should be reserved for
stress-testing. Second, analysts must justify which variables are treated
as transportable moderators $Z$ and which are treated as non-transported
anchors $A$, especially when the two are collinear; incorrect assignment
can absorb genuine moderator signal into $\hat\gamma$ or transport
anchor-aligned noise through $\hat\beta$.

AMT-MA fits naturally into the workflow of a clinician asking "can the
evidence I have be projected onto the patient in front of me?" When the
sign-stability diagnostic says yes, AMT-MA reports a transport effect with
a perturbation interval whose operating characteristics were evaluated in
simulation (Supplementary Table~\ref{tab:coverage}). When it says no, the reviewer and the
clinician both gain a reliable signal that a single summary would be
misleading. This complements, rather than replaces, the random-effects
toolkit: the historical average remains the right answer for
evidence-base description, and AMT-MA is the answer for decision
projection. The coverage and bias gains of AMT-MA should be
interpreted primarily relative to unadjusted pooling (FE, DL, PM, REML);
correctly specified joint WLS meta-regression on $[Z, A]$ remains a
strong comparator, and in several adversarial scenarios it achieved
comparable or slightly better coverage than AMT-MA v3
(Supplementary Table~\ref{tab:coverage}). AMT-MA's incremental value
over correctly specified WLS is the explicit separation of
estimand (transport only $Z^{\top}\beta^{\star}$) and the sign-stability
/ abstention diagnostic — neither of which WLS provides by default.

\section{Limitations}
\label{sec:limitations}

Anchor selection requires domain knowledge, and the $Z$-versus-$A$
assignment (transportable moderator versus non-transported nuisance) is
itself a modelling decision that should be pre-specified or subjected to
sensitivity analysis. In the Olkin application, for example, calendar year
appears in $Z$ (a transportable time trend) while categorical era indicators
appear in $A$; a defensible alternative treats year as an anchor instead.
With very small $K$, the number of anchor and moderator variables must be
restricted. Aggregate-data analyses inherit ecological bias; individual
patient data would yield a stronger version of the procedure. The tuning parameters $(\rho, \lambda_\gamma)$ may be selected by
leave-one-study-out when $K$ is adequate; in the present simulations and
small applications we report fixed defaults and sensitivity settings,
and other selectors (e.g., cross-validation on future trials when
available) may perform differently. Inference relies on perturbation resampling from
reported standard errors; paired-study cluster bootstrap and Bayesian
alternatives are natural extensions. Coverage at the nominal $95\%$ level
was achieved in dominant-mega-trial, confounded-anchor and anchor-shift
scenarios relative to unadjusted pooling; WLS meta-regression with the
correct $[Z,A]$ specification retains competitive coverage in the same
settings. AMT-MA's incremental contribution over WLS is the sign-stability
diagnostic and the explicit non-transport of $\hat\gamma$, not a coverage
improvement over WLS.

\bibliographystyle{plainnat}
\bibliography{refs}

\appendix

\section{Population analysis and proofs}
\label{app:proofs}

This appendix formalises the population behaviour of the nuisance-anchor
AMT-MA estimator~\eqref{eq:objective}. Let
\[
r(\beta,\gamma) = y - Z^{\top}\beta - A^{\top}\gamma
\]
denote the trial-level residual under the joint model, with
$\mathbb{E}[\varepsilon \mid Z, A] = 0$ and $\operatorname{Var}(\varepsilon) = v$,
$w = 1/v$. Denote the population counterparts
\[
\mathcal{L}_{\mathrm{avg}}(\beta,\gamma) = \mathbb{E}[w\, r(\beta,\gamma)^2],
\quad
\mathcal{L}_g(\beta,\gamma) = \mathbb{E}[w\, r(\beta,\gamma)^2 \mid g],
\]
and the population objective
\begin{equation}
\label{eq:pop-obj}
(\beta^{\mathrm{amt}}, \gamma^{\mathrm{amt}}) :=
\arg\min_{\beta,\gamma}
(1-\rho)\mathcal{L}_{\mathrm{avg}}(\beta,\gamma)
+ \rho\,\mathcal{L}_{\mathrm{rob}}(\beta,\gamma)
+ \lambda_\gamma\|\gamma\|_2^2 + \lambda_R\|\beta\|_2^2.
\end{equation}

\subsection{Relation to the v1/v2 formulation}

An earlier version of this work formulated the anchor component as a
quadratic \emph{penalty} $\lambda_A \mathbb{E}[(\Pi_{A_\perp}(y-Z^{\top}\beta))^2]$,
where $A_\perp = (I - P_Z^W) A$ is the $Z$-orthogonal anchor residual.
A reviewer observed — and we verified numerically — that under this
definition the penalty is \emph{constant in $\beta$}: for any $\beta_1, \beta_2$,
$L_A(\beta_1) - L_A(\beta_2)$ is zero to machine precision, because
$W^{1/2} Z$ lies in the null space of $\Pi_{A_\perp}$ by construction.
Consequently $\lambda_A$ cannot change the optimiser. The v3 formulation
above replaces this inert penalty with joint estimation of $(\beta,\gamma)$,
which is genuinely non-redundant and faithfully encodes the estimand shift
``$A^{\top}\gamma$ is estimated but not transported''.

\subsection{Consistency when the anchor is null}

\begin{proposition}[Consistency under no anchor-aligned instability]
\label{prop:consistency}
Suppose $\delta_A(Z,A) \equiv 0$ (equivalently, the data-generating
$\gamma$ is zero), and the weighted Gram matrix
$\mathbb{E}[W^{1/2}[Z, A][Z,A]^{\top}W^{1/2}]$ is invertible. Then for
every $\rho \in [0,1]$, $\lambda_R \downarrow 0$, $\lambda_\gamma \geq 0$
and every choice of $\mathcal{L}_{\mathrm{rob}} \in \{\max, \mathrm{softmax}_\alpha,
\mathrm{CVaR}_q\}$, the population optimiser satisfies $\beta^{\mathrm{amt}} =
\beta^{\star}$ and $\gamma^{\mathrm{amt}} = 0$ (up to $O(\lambda_\gamma)$
shrinkage of $\gamma$).
\end{proposition}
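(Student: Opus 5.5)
The argument rests on the fact that, when $\delta_A\equiv 0$, the pair $(\beta^\star,0)$ minimises every component loss simultaneously. Positive definiteness of the weighted Gram matrix then makes that minimiser unique, and the penalties perturb it only to first order.

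\emph{Step 1: decompose each component loss.} Under $\delta_A\equiv0$ we have $y = Z^{\top}\beta^\star + \varepsilon$ with $\mathbb{E}[\varepsilon\mid Z,A]=0$. Write $X=[Z,A]$, $\eta=(\beta,\gamma)$ and $\eta^\star=(\beta^\star,0)$. Expanding the square and dropping the cross term by conditional mean zero gives, for every regime $g$,
\[
\mathcal{L}_g(\eta)=\mathbb{E}[w\varepsilon^2\mid g]+(\eta-\eta^\star)^{\top}M_g(\eta-\eta^\star),
\qquad M_g=\mathbb{E}[wXX^{\top}\mid g].
\]
The same identity holds for $\mathcal{L}_{\mathrm{avg}}$, with $M=\sum_g \pi_g M_g$ and regime weights $\pi_g>0$. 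Each $\mathcal{L}_g$ is therefore a convex quadratic minimised at $\eta^\star$.

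\emph{Step 2: combine via monotone aggregators.} The maps $\max$, $\mathrm{softmax}_\alpha$ (including the fixed scale $s$) and $\mathrm{CVaR}_q$ are convex, coordinatewise nondecreasing functions of the vector $(\mathcal{L}_g)_g$. Composing them with the convex $\mathcal{L}_g$ therefore yields a convex $\mathcal{L}_{\mathrm{rob}}$. Because every $\mathcal{L}_g$ attains its minimum at $\eta^\star$, monotonicity shows that $\mathcal{L}_{\mathrm{rob}}$ is also minimised at $\eta^\star$. Consequently $F_0=(1-\rho)\mathcal{L}_{\mathrm{avg}}+\rho\mathcal{L}_{\mathrm{rob}}$ is minimised at $\eta^\star$ for every $\rho\in[0,1]$.

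\emph{Step 3: uniqueness.} This is the main obstacle: the Gram condition is stated only for the pooled matrix $M$, not for each $M_g$. For $\rho<1$ the term $(1-\rho)\mathcal{L}_{\mathrm{avg}}$ is strictly convex, so $F_0$ is strictly convex and its minimiser is unique. For $\rho=1$, suppose $\eta\neq\eta^\star$. Since $M=\sum_g\pi_gM_g\succ0$ and $\pi_g>0$, there is some $g$ with $(\eta-\eta^\star)^{\top}M_g(\eta-\eta^\star)>0$. For $\mathcal{L}_{\mathrm{rob}}$ equal to softmax or $\max$, strict increase in that coordinate then gives $\mathcal{L}_{\mathrm{rob}}(\eta)>\mathcal{L}_{\mathrm{rob}}(\eta^\star)$. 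The argument needs care:
\begin{itemize}
\item For softmax, strict monotonicity in every coordinate makes the step immediate.
\item For $\max$, I would first try showing that any deviation raises the worst regime. This can fail when the worst regime is unaffected, so at $\rho=1$ I would fall back on an additional identifiability assumption, or use the $\lambda_R$ ridge to select $\eta^\star$ among minimisers.
\item For $\mathrm{CVaR}_q$, the same caveat as for $\max$ applies.
\end{itemize}

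\emph{Step 4: penalties.} Add $\lambda_\gamma\|\gamma\|^2+\lambda_R\|\beta\|^2$; this sum is strictly convex in $\eta$, so the penalised minimiser $\eta(\lambda)$ is unique. I would compare the first-order conditions of the penalised problem with those of $F_0$, using the implicit function theorem for softmax and a subgradient argument for the nonsmooth aggregators. Together with local strong convexity of $F_0$ near $\eta^\star$, this gives $\|\eta(\lambda)-\eta^\star\|=O(\lambda_\gamma\|\gamma\|+\lambda_R\|\beta^\star\|)$. In fact at $\eta^\star$ the $\gamma$-penalty gradient vanishes because $\gamma=0$, so only $\lambda_R$ perturbs the minimiser. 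Letting $\lambda_R\downarrow0$ therefore gives $\beta^{\mathrm{amt}}\to\beta^\star$ and $\gamma^{\mathrm{amt}}=0$. This is slightly stronger than the stated $O(\lambda_\gamma)$ qualifier, which I would note.
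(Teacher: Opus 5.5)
Your route is the paper's: under $\delta_A\equiv0$ every component loss is a quadratic minimised at $\eta^\star=(\beta^\star,0)$, a monotone aggregator keeps that minimiser, and the penalties do not disturb it as $\lambda_R\downarrow0$. You are more careful than the paper. Its proof simply asserts that each $\mathcal{L}_g$ has a \emph{unique} minimiser, which needs every $M_g$ to be invertible; the pooled Gram hypothesis does not give this. Your Step~3 correctly settles $\rho<1$, and softmax at $\rho=1$, from $M=\sum_g\pi_gM_g\succ0$ alone. Your remark that $\gamma$ is not shrunk at all is also right, since $\eta^\star$ minimises $F_0$ and $\lambda_\gamma\|\gamma\|_2^2$ simultaneously.

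The gap is the case you leave open, $\rho=1$ with $\max$ or $\mathrm{CVaR}_q$. Your fallbacks are unnecessary, and the ridge one would not work. As $\lambda_R\downarrow0$ the ridge selects a minimiser of smallest $\|\beta\|$, which is pulled toward the origin, not toward $\beta^\star$.

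The missing observation is that the regime noise constants coincide. Read $\operatorname{Var}(\varepsilon)=v$ conditionally on the trial and its regime; your Step~1 already needs this reading to kill $\mathbb{E}[w\varepsilon X\mid g]$. With $w=1/v$ you then get $\mathcal{L}_g(\eta^\star)=\mathbb{E}[w\varepsilon^2\mid g]=1$ for every $g$, so all regimes are tied for the worst at $\eta^\star$. Hence for $\eta\neq\eta^\star$
\[
\max_g\mathcal{L}_g(\eta)\ \ge\ \sum_g\pi_g\mathcal{L}_g(\eta)=\mathcal{L}_{\mathrm{avg}}(\eta)=1+(\eta-\eta^\star)^{\top}M(\eta-\eta^\star)\ >\ 1=\max_g\mathcal{L}_g(\eta^\star).
\]
The same holds for $\mathrm{CVaR}_q$, which dominates the mean and equals $1$ on a constant loss vector.

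The same bound also repairs Step~4. Let $q_g=(\eta-\eta^\star)^{\top}M_g(\eta-\eta^\star)$. For softmax, Jensen's inequality gives $\tfrac{s}{\alpha}\log\bigl(G^{-1}\sum_g e^{\alpha q_g/s}\bigr)\ge G^{-1}\sum_g q_g$. Together these yield quadratic growth $F_0(\eta)-F_0(\eta^\star)\ge c\|\eta-\eta^\star\|^2$ with $c>0$, for every $\rho$ and every aggregator.

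Quadratic growth, not local strong convexity, is what Step~4 needs. Local strong convexity actually fails near $\eta^\star$ for $\max$ when the $M_g$ are singular; for example, $\max(x^2,y^2)$ is constant on small segments $\{(\epsilon,t)\}$. Comparing objective values at $\eta(\lambda)$ and $\eta^\star$ gives $c\|\eta(\lambda)-\eta^\star\|^2\le\lambda_R\bigl(\|\beta^\star\|^2-\|\beta(\lambda)\|^2\bigr)$, which already forces $\eta(\lambda)\to\eta^\star$. Bounding the right side by $\lambda_R\|\beta(\lambda)-\beta^\star\|\,(\|\beta^\star\|+\|\beta(\lambda)\|)$ sharpens this to $O(\lambda_R)$, uniformly in $\lambda_\gamma\ge0$. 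No implicit-function or subgradient argument is needed.
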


\begin{proof}
Under $\gamma = 0$, $\mathcal{L}_{\mathrm{avg}}$ and every
$\mathcal{L}_g$ are non-negative quadratics in $(\beta,\gamma)$ with the same
unique minimiser at $(\beta,\gamma) = (\beta^{\star},0)$ up to the noise
constant $\mathbb{E}[w\varepsilon^2]$. Any monotone aggregate of these
(softmax, CVaR, max) attains its minimum at the common minimiser.
As $\lambda_R \downarrow 0$ and $\lambda_\gamma \geq 0$ with finite Gram
matrix, uniqueness is preserved.
\end{proof}

\subsection{Bias decontamination under anchor-aligned instability}

\begin{assumption}[Anchor-aligned instability]
\label{assump:aligned}
$\delta_A(Z,A) = A^{\top}\gamma_0$ for some unknown population coefficient
$\gamma_0 \neq 0$. No orthogonality between $Z$ and $A$ is assumed.
\end{assumption}

\begin{proposition}[What joint modelling of $A$ achieves]
\label{prop:nuisance}
Under Assumption~\ref{assump:aligned}:
\begin{enumerate}
\item As $\lambda_\gamma \downarrow 0$, the population joint WLS solution
$(\beta^{\mathrm{wls}}, \gamma^{\mathrm{wls}}) = \bigl(\mathbb{E}[W[Z,A][Z,A]^{\top}]\bigr)^{-1}
\mathbb{E}[W[Z,A]y]$ satisfies $\beta^{\mathrm{wls}} = \beta^{\star}$ and
$\gamma^{\mathrm{wls}} = \gamma_0$.
\item For $\rho = 0$ and $\lambda_\gamma = 0$, the AMT-MA optimum
coincides with joint WLS; hence $\beta^{\mathrm{amt}} = \beta^{\star}$
and the stable target effect $\bar Z_Q^{\top}\beta^{\mathrm{amt}} =
\bar Z_Q^{\top}\beta^{\star}$.
\item For $\lambda_\gamma > 0$, $\gamma^{\mathrm{amt}}$ shrinks toward
zero; $\beta^{\mathrm{amt}}$ absorbs part of the anchor contribution and
departs from $\beta^{\star}$. The bias scales with the weighted cross
moment $\mathbb{E}[WZA^{\top}]$: it is zero under weighted orthogonality
and grows with anchor--moderator confounding.
\item The unadjusted random-effects mean is
$\theta^{\mathrm{RE}} = \mathbb{E}[y] = \mathbb{E}[Z^{\top}\beta^{\star}]
+ \mathbb{E}[A^{\top}\gamma_0]$ and carries the non-vanishing
$\mathbb{E}[A^{\top}\gamma_0]$ contamination.
\end{enumerate}
\end{proposition}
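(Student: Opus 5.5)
The plan is to treat the four items in order, each as a short computation with population normal equations. Under Assumption~\ref{assump:aligned} the data model is $y = Z^{\top}\beta^{\star} + A^{\top}\gamma_0 + \varepsilon$ with $\mathbb{E}[\varepsilon\mid Z,A]=0$. Write $X=[Z,A]$, $\eta=(\beta,\gamma)$ and $G=\mathbb{E}[WXX^{\top}]$, which is invertible by the hypothesis of Proposition~\ref{prop:consistency}.

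For item 1, I substitute the model into $\mathbb{E}[WXy]$. This gives $\mathbb{E}[WXy] = G(\beta^{\star},\gamma_0) + \mathbb{E}[WX\varepsilon]$. The last term vanishes by iterated expectations, since $w=1/v$ is a function of the design and not of $\varepsilon$. Hence $G^{-1}\mathbb{E}[WXy]=(\beta^{\star},\gamma_0)$ exactly. No orthogonality between $Z$ and $A$ is needed, because the inverse of the joint Gram matrix handles the cross moment.

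For item 2, I set $\rho=0$ and $\lambda_\gamma=0$, so the objective reduces to $\mathcal{L}_{\mathrm{avg}}+\lambda_R\|\beta\|^2$. The first-order condition is $(G+\lambda_R D)\eta = \mathbb{E}[WXy]$, where $D=\mathrm{diag}(I_p,0)$. Letting $\lambda_R\downarrow 0$ and using continuity of the matrix inverse at the invertible $G$, the optimiser converges to the joint WLS solution of item 1. Linearity of $\beta\mapsto\bar Z_Q^{\top}\beta$ then gives the target statement. This is the same argument as in Proposition~\ref{prop:consistency}, except that the common minimiser is now $(\beta^{\star},\gamma_0)$ rather than $(\beta^{\star},0)$.

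Item 3 is the main obstacle, because it is a qualitative claim that has to be made precise. I will prove it at $\rho=0$, where the solution is explicit, and remark that for $\rho>0$ the softmax term only reweights the regime Gram matrices; the same structure then holds with $G$ replaced by a convex combination of the $G_g$ at the optimum. Partition
\[
G=\begin{pmatrix} G_{ZZ} & G_{ZA}\\ G_{AZ} & G_{AA}\end{pmatrix}.
\]
With the $\gamma$-ridge the normal equations are $G_{ZZ}\beta+G_{ZA}\gamma = G_{ZZ}\beta^{\star}+G_{ZA}\gamma_0$ and $G_{AZ}\beta+(G_{AA}+\lambda_\gamma I)\gamma = G_{AZ}\beta^{\star}+G_{AA}\gamma_0$. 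Solving the first equation for $\beta$ gives
\[
\beta^{\mathrm{amt}}-\beta^{\star} = G_{ZZ}^{-1}G_{ZA}(\gamma_0-\gamma^{\mathrm{amt}}).
\]
Eliminating $\beta$ gives $(S+\lambda_\gamma I)\gamma^{\mathrm{amt}} = S\gamma_0$, where $S=G_{AA}-G_{AZ}G_{ZZ}^{-1}G_{ZA}$ is the Schur complement, which is positive definite. Thus $\gamma^{\mathrm{amt}}=(S+\lambda_\gamma I)^{-1}S\gamma_0$ is shrunk toward zero, with $\gamma_0-\gamma^{\mathrm{amt}}=\lambda_\gamma(S+\lambda_\gamma I)^{-1}\gamma_0 = O(\lambda_\gamma)$. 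The bias in $\beta$ is therefore proportional to $G_{ZA}=\mathbb{E}[WZA^{\top}]$. It vanishes under weighted orthogonality and grows with the cross moment, which is exactly the claim.

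For item 4, I take expectations of the model to get $\mathbb{E}[y]=\mathbb{E}[Z^{\top}\beta^{\star}]+\mathbb{E}[A^{\top}\gamma_0]$. I interpret the unadjusted random-effects mean as the population pooled mean; up to the precision weighting, the weighted version is analogous with $\mathbb{E}[w\,\cdot]/\mathbb{E}[w]$. The contamination term $\mathbb{E}[A^{\top}\gamma_0]$ is nonzero whenever the anchor has nonzero weighted mean along $\gamma_0$. This condition should be stated explicitly as the meaning of ``non-vanishing'', since centred anchors would make it zero.
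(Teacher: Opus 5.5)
Your proposal is correct. For items 1, 2 and 4 it is the paper's sketch written out in full: substitute the model into the joint normal equations, observe that the $\rho=0$ problem is joint WLS, and take expectations. Your caveat that ``non-vanishing'' requires $\mathbb{E}[A^{\top}\gamma_0]\neq 0$ is a fair correction to the statement.

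Item 3 is where you go further than the paper. The paper argues only through the endpoint $\lambda_\gamma\uparrow\infty$, where the fit becomes the reduced model with $\gamma=0$ and picks up the omitted-variable term $G_{ZZ}^{-1}\mathbb{E}[WZA^{\top}]\gamma_0$. It leaves the finite-$\lambda_\gamma$ behaviour to ``standard ridge bias''. Your block elimination solves the ridge problem exactly at every $\lambda_\gamma$, giving $\beta^{\mathrm{amt}}-\beta^{\star}=\lambda_\gamma G_{ZZ}^{-1}G_{ZA}(S+\lambda_\gamma I)^{-1}\gamma_0$. This formula buys three things:
\begin{enumerate}
\item it reproduces the paper's endpoint as $\lambda_\gamma\to\infty$;
\item it shows the bias is $O(\lambda_\gamma)$ and vanishes identically when $G_{ZA}=0$, at every finite $\lambda_\gamma$;
\item the eigenvalues $s_j/(s_j+\lambda_\gamma)\in(0,1)$ of the symmetric matrix $(S+\lambda_\gamma I)^{-1}S$ (with $s_j$ the eigenvalues of $S$) make ``$\gamma^{\mathrm{amt}}$ shrinks toward zero'' precise as $\|\gamma^{\mathrm{amt}}\|_2<\|\gamma_0\|_2$.
\end{enumerate}

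One consequence of your $\rho>0$ remark should be stated rather than glossed over. Write $P(g)$ for the regime probabilities and $\pi_g$ for the softmax weights at the optimum. The first-order condition then has $\tilde G=\sum_g\{(1-\rho)P(g)+\rho\pi_g\}G_g$ in place of $G$, so the bias is driven by $\tilde G_{ZA}$. These effective regime weights differ from $P(g)$; for example, $\pi_g$ is uniform when the regime losses tie. Hence marginal weighted orthogonality $\mathbb{E}[WZA^{\top}]=0$ no longer forces zero bias; you need orthogonality within regimes, or of the reweighted cross moment.

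So the ``zero under weighted orthogonality'' clause is proved, by you and by the paper, only at $\rho=0$. You should say so explicitly rather than present the $\rho>0$ case as the same claim.
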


\begin{proof}[Sketch]
(i) Substitute the data-generating model into the joint WLS normal equations.
(ii) Minimisation of $\mathcal{L}_{\mathrm{avg}}$ alone is a linear problem
identical to joint WLS. (iii) Standard ridge-regression bias: the estimator
moves toward the reduced model (with $\gamma$ forced to zero) as
$\lambda_\gamma \uparrow \infty$; under non-orthogonality, the reduced
model attributes $\mathbb{E}[WZ(A^{\top}\gamma_0)]$ to $\beta$. (iv)
Linearity of expectation.
\end{proof}

\noindent
\textbf{Interpretation.} v3 guarantees $\beta^{\mathrm{amt}} = \beta^{\star}$
in the $\lambda_\gamma \to 0$ limit \emph{regardless of anchor--moderator
correlation}, a property the v2 formulation did not have. The practical
$\lambda_\gamma > 0$ default trades a small amount of bias (scaling with
anchor--moderator correlation) for substantially reduced variance of
$\hat\beta$ when $K$ is small.

The robust term plays an orthogonal role: it penalises $\beta$'s that
perform poorly in any single regime. When regime assignments are
uncorrelated with $A$, the robust term and the joint-anchor modelling
act on different directions of the parameter space and are
non-redundant.

\subsection{Sign-flip regimes}

When the data-generating process has no single $\beta^{\star}$ (e.g.,
regions or subpopulations have opposite-signed true effects),
Proposition~\ref{prop:consistency} does not apply. The behaviour of
$\beta^{\mathrm{amt}}$ becomes that of a minimax-regularised regression
on a mis-specified model: pulled toward directions with bounded worst-case
loss, typically the origin. The sign-stability diagnostic described in
the main text is the primary mechanism by which AMT-MA reports this
mis-specification rather than silently producing a pooled average.

\subsection{What the method does \emph{not} claim}

\begin{enumerate}
\item \textbf{Uniform bias reduction.} Proposition~\ref{prop:nuisance}
shows that under weighted orthogonality, standard joint WLS meta-regression
on $[Z, A]$ already recovers $\beta^{\star}$. AMT-MA's advantage over that
baseline is (a) the explicit separation of estimand ($Z^{\top}\beta$) from
transport-irrelevant anchor coefficient ($A^{\top}\gamma$), and (b) the
sign-stability diagnostic.
\item \textbf{Uniform RMSE reduction.} Under true homogeneity AMT-MA is
less efficient than FE/DL pooling and comparable to (not better than) WLS
meta-regression, reflecting the cost of estimating a covariate-adjusted
target surface.
\item \textbf{Nominal coverage under all scenarios.} Perturbation intervals
improved coverage relative to unadjusted pooling in several adversarial
scenarios (dominant-mega-trial, confounded anchor; see Supplementary
Table~S1), but did not uniformly achieve nominal $95\%$ coverage.
Coverage was furthest from nominal under sign-flip heterogeneity, where
no single stable target effect is identified.
\end{enumerate}

A fully rigorous asymptotic analysis (sample-estimator consistency,
Hartung--Knapp-style corrections, $\rho$- and $\lambda_\gamma$-selection
variance, nonstandard distributions at $\rho = 1$) is deferred to
follow-up work.

\section{Computational notes}
\label{app:compute}

The sample estimator solves~\eqref{eq:pop-obj} by Nelder--Mead warm-started
from the joint WLS solution on $[Z, A]$, followed by BFGS for local
refinement. Convergence tolerances are $10^{-10}$. The robust loss is
evaluated on a scale-normalised regime loss vector $L_g / s$, with
$s = \mathrm{median}_g L_g(\beta^{\mathrm{wls}}, \gamma^{\mathrm{wls}})$
fixed from the warm start; this renders the softmax sharpness $\alpha$
effect-scale invariant (RD, log-OR, log-HR all yield comparable behaviour
at $\alpha = 6$). Perturbation intervals resample $y_i^{(b)} \sim
\mathcal{N}(y_i^{\mathrm{obs}}, v_i)$ around the observed study-level estimates
and re-select $\rho$ and $\lambda_\gamma$ on each replicate when these
are treated as tuning parameters.
All reported results were produced on a 2024 Apple M-series laptop; the
$R = 500 \times 6 \times 8 = 24{,}000$ baseline fits complete in
approximately $16$ minutes. The coverage sub-analysis
(Supplementary Table~\ref{tab:coverage}) uses $R = 100$, $B = 30$, fixed
tuning parameters, and completed in approximately $22$ minutes.
Application-level perturbation intervals (Tables~\ref{tab:olkin}
and~\ref{tab:aspirin}) use $B = 500$ at the full sample size $K$ of each
dataset. Code and analysis scripts are available at the project
repository (to be archived on Zenodo upon publication).

\section*{Supplementary material}
\label{app:supplement}

\begin{table}[!htbp]
\centering
\caption{Supplementary Table S1 — 95\% interval coverage by scenario and method over $R=100$ Monte Carlo replications. Classical (FE, DL, PM, WLS) intervals are Wald. AMT-MA v3 intervals are perturbation bootstrap with $B=30$ refits per replication and fixed tuning parameters $(\rho, \lambda_\gamma) = (0.2, 1)$ (no inner re-selection). Mean interval width below each coverage entry (in parentheses).}
\label{tab:coverage}
\small
\begin{tabular}{lccccc}
\toprule
Scenario & FE & DL\_RE & PM\_RE & WLS\_meta\_reg & AMT\_v3\_rho20 \\
\midrule
stable & 0.95 (0.014) & 0.97 (0.015) & 0.97 (0.016) & 0.89 (0.027) & 0.90 (0.031) \\
anchor\_shift & 0.60 (0.014) & 0.75 (0.018) & 0.76 (0.018) & 0.85 (0.026) & 0.91 (0.029) \\
sign\_flip & 0.35 (0.014) & 0.71 (0.022) & 0.69 (0.022) & 0.49 (0.026) & 0.56 (0.030) \\
target\_shift & 0.83 (0.015) & 0.89 (0.017) & 0.88 (0.017) & 0.90 (0.027) & 0.86 (0.029) \\
dominant\_megatrial & 0.01 (0.010) & 0.30 (0.016) & 0.29 (0.016) & 0.93 (0.025) & 0.85 (0.032) \\
confounded\_anchor & 0.34 (0.015) & 0.47 (0.018) & 0.48 (0.018) & 0.92 (0.025) & 0.86 (0.028) \\
\bottomrule
\end{tabular}
\end{table}

\end{document}